\documentclass[conference,a4paper]{IEEEtran}
\addtolength{\topmargin}{9mm}

\usepackage[utf8]{inputenc} 
\usepackage[T1]{fontenc}
\usepackage{url}              
\usepackage{algorithm}
\usepackage{hyperref}
\usepackage{multirow}
\usepackage{bbding}
\usepackage{color}
\usepackage{amsmath,amssymb,amsfonts}
\usepackage{amsthm}
\usepackage{graphicx}
\usepackage{setspace}
\usepackage{multicol}
\usepackage{multirow}
\usepackage{cite}
\usepackage{float}
\usepackage{color}
\usepackage{alltt, listings}
\usepackage{subfigure}
\usepackage{algpseudocode}
\usepackage{wrapfig}
\usepackage{url}
\usepackage{bm}
\usepackage{amsmath}
\usepackage{mleftright}
\usepackage{authblk}
\mleftright

\newtheorem{assumption}{Assumption}
\newtheorem{theorem}{Theorem}
\newtheorem{lemma}{Lemma}

\theoremstyle{definition}
\newtheorem{definition}{Definition}

\hyphenation{op-tical net-works semi-conduc-tor}

\begin{document}

\title{Towards Quantum-Safe Federated Learning via Homomorphic Encryption: Learning with Gradients} 


\author[a]{Guangfeng Yan}
\author[b]{Shanxiang Lyu}
\author[a]{Hanxu Hou}
\author[d]{Zhiyong Zheng}
\author[a]{Linqi Song}
\affil[a]{Department of Computer Science, City University of Hong Kong, Hong Kong SAR, China}
\affil[b]{College of Cyber Security, Jinan University, Guangzhou, China}
\affil[c]{School of Mathematics, Renmin University of China, Beijing, China}

\maketitle

\begin{abstract}
This paper introduces a privacy-preserving distributed learning framework via private-key homomorphic encryption. Thanks to the randomness of the quantization of gradients, our learning with error (LWE) based encryption can eliminate the error terms, thus avoiding the issue of error expansion in conventional LWE-based homomorphic encryption.
The proposed system allows a large number of learning participants to engage in neural network-based deep learning collaboratively over an honest-but-curious server, while ensuring the cryptographic security of participants' uploaded gradients. 
\end{abstract}

\begin{IEEEkeywords}
Distributed Learning, Private-key Encryption
\end{IEEEkeywords}

\section{Introduction}
\label{introduction}
Federated learning (FL) comprises a family of decentralized training algorithms for machine learning~\cite{dean2012large,bekkerman2011scaling,mcmahan2016federated}, enabling individuals to collaboratively train a model without centralizing the training data. This approach alleviates the computational burden on data centers by distributing training computation to the edge. However, it is crucial to note that while federated learning offers a decentralized framework, it may not inherently safeguard the privacy of clients. The updates received by the central server have the potential to inadvertently reveal information about the client's training data~\cite{DBLP:journals/tifs/PhongAHWM18,DBLP:journals/tdsc/LiuWLPW23}.

Popular strategies to protect the privacy of clients for federated learning include differential privacy (DP) based and homomorphic encryption (HE) based methods.
The idea of DP is to add noises to the gradients to protect the secrecy of gradients~\cite{abadi2016deep}. Existing works on DP based learning algorithms include local DP (LDP)~\cite{erlingsson2014rappor}, DP with selective parameter updates\cite{shokri2015privacy}, DP based on lattices~\cite{stevens2022efficient} etc.
Although DP can be adopted in a straightforward manner, it has the downside of weaker privacy guarantee and potential accuracy loss.
 
HE is a cryptographic technique that enables computations to be performed on encrypted data without the need to decrypt it first. In the context of federated learning, homomorphic encryption plays a crucial role in ensuring the privacy of individual participants' data. Since the aggregation of gradients in FL only involves addition, many recent works~\cite{zhang2020batchcrypt,DBLP:conf/icc/ZhangFWZC20} have proposed to employ additively homomorphic encryption based on Paillier~\cite{paillier1999public}.
However, Paillier's security is broken as soon as one can efficiently factor large integers using Shor's quantum algorithm~\cite{shor1999polynomial}. 

Lattice-based cryptography is considered quantum-resistant~\cite{peikert2014lattice,wang2022quantum,saliba2021reconciliation}. Certain lattice-based problems, such as the Learning With Errors (LWE) problem~\cite{regev2009lattices}, are believed to be hard for quantum computers to solve efficiently. Compared to lattice-based fully homomorphic encryption \cite{gentry2009fully,zheng2022fully}, lattice-based additively homomorphic encryption~\cite{DBLP:journals/tifs/PhongAHWM18} is considered a promising approach for FL, as only addition is needed in the aggregation of gradients. However, the scheme in~\cite{DBLP:journals/tifs/PhongAHWM18} assumes a shared negotiation of common public-private key pairs among all clients, raising concerns about potential data leakage between clients. Even when limited to linear functions, exisitng HE schemes~\cite{DBLP:journals/tifs/PhongAHWM18} do  not allow to perform an arbitrary number of additions, because, each time two ciphertexts are added, the error gets bigger.
 
 This study aims to devise an HE scheme where each client possesses an individual private key. The unique contributions of this work can be succinctly summarized as follows:
\begin{itemize}
 \item \textbf{Post-Quantum Security:} Our system guarantees the non-disclosure of participants' information to the honest-but-curious parameter (cloud) server. The inherent security, stemming from lattice-based computational problems, remains robust even in the quantum era. Demonstrating comparable accuracy to a federated learning system trained on the joint dataset of all participants, our system attains identical precision in its predictive outcomes.

 \item \textbf{Aggregation with High Accuracy:} By introducing the concept of "learning with gradients," our approach eliminates the error term present in conventional Learning With Errors (LWE)-based encryption. Furthermore, the incorporation of randomized signed quantization significantly reduces the probability of overflow arising from the summation of gradients.
    
\item \textbf{Small Communication Cost:} The core of our encryption mechanism involves utilizing $\bm{As}$ in LWE as ``masking''. This design results in a small communication factor (i.e. ciphertext expansion ratio), ensuring an efficient and expedient data transfer process. When the number of quantization bits is $b=6,8,10$, the increased communication factors $\tau = 3.5, 2.876, 2.5$ respectively.
\end{itemize}
\section{Preliminaries}

We consider a distributed learning problem, where $N$ clients collaboratively participate in training a shared model via a central server.
The local dataset located at client $i$ is denoted as $\mathcal{D}^{(i)}$, and the union of all local datasets $\mathcal{D} = \{\mathcal{D}^{(i)}, i =1,...,N\}$.The objective is to minimize the empirical risk over the data held by all clients, i.e., solve the optimization problem 
\begin{equation}
\begin{array}{llll}	\min_{\bm{\theta} \in \mathbb{R}^d}F(\bm{\theta}) = \cfrac{1}{N} \sum_{i=1}^N  \mathbb{E}_{\xi^{(i)}\sim \mathcal{D}^{(i)}}[l(\bm{\theta};\xi^{(i)})],
	\end{array}\label{optim_problem}
\end{equation}
where $l(\bm{\theta};\xi^{(i)})$ is the local loss function of the model $\bm{\theta}$ towards one data sample $\xi^{(i)}$.

A standard approach to solve this problem is DSGD~\cite{{yan2022ac},{bekkerman2011scaling}}, where each client $i$ first downloads the global model $\bm{\theta}_t$ from server at iteration $t$, then randomly selects a batch of samples $B^{(i)}_t \subseteq D^{(i)}$ with size $B$ to compute its local stochastic gradient with model parameter $\bm{\theta}_t$: $\bm{g}^{(i)}_t = \frac{1}{B} \sum_{\xi^{(i)} \in B^{(i)}_t} \nabla \ell(\bm{\theta}_t;\xi^{(i)})$. Then the server aggregates these gradients and sends the aggregated gradient $\bm{g}_{total}$ back to all clients:$\bm{g}_{total} = \sum_{i=1}^N \bm{g}^{(i)}_t$. Finally, each clients update their local model:
\begin{equation}\label{eq:update_model}
    \bm{\theta}_{t+1} = \bm{\theta}_t - \cfrac{\eta}{N} \bm{g}_{total} 
\end{equation}
where $\eta$ is the learning rate. We make the following two common assumptions on such the raw gradients $\nabla l(\bm{\theta}_t;\xi^{(i)})$ and the objective function $F(\bm{\theta})$~\cite{{bottou2018optimization},{data2023byzantine}}: 
\begin{assumption}[Bounded Variance]
	For parameter $\bm{\theta}_t$, the stochastic gradient $\nabla l(\bm{\theta}_t;\xi^{(i)})$ sampled from any local dataset have uniformly bounded variance for all clients:
	\begin{align}
	\mathbb{E}_{\xi^{(i)}\sim \mathcal{D}^{(i)}}\left[\|\nabla \ell(\bm{\theta}_t;\xi^{(i)})-\nabla F(\bm{\theta}_t)\|^2\right] \le \sigma^2.
	\end{align}
	\label{ass:stochastic_gradient} 
\end{assumption}
\vspace{-5mm}
\begin{assumption}[Smoothness]
	The objective function $F(\bm{\theta})$ is $\nu$-smooth: $\forall \bm{\theta},\bm{\theta}' \in \mathbb{R}^d$, $\|\nabla F(\bm{\theta})-\nabla F(\bm{\theta}')\| \leq \nu\|\bm{\theta}-\bm{\theta}'\|$.
	\label{ass:smoothnesee} 
\end{assumption}
Assumption~\ref{ass:smoothnesee} further implies that $\forall \bm{\theta},\bm{\theta}' \in \mathbb{R}^d$, we have 
\begin{equation}
F(\bm{\theta}') \leq F(\bm{\theta}) + \nabla F(\bm{\theta})^\mathrm{T} (\bm{\theta}'-\bm{\theta}) + \frac{\nu}{2} \|\bm{\theta}'-\bm{\theta}\|^2.
\label{eq:smooth_1}
\end{equation}

\section{Quantum-Safe Federated Learning}

\subsection{The Encryption Scheme}
The security of our encryption scheme relies on the hardness of solving LWE~\cite{peikert2014lattice}.
\begin{definition}[The LWE problem]
Let	$\bm A \leftarrow \mathbb{Z}^{m\times n}_q$, $\bm s \leftarrow\mathbb{Z}^{n}_q$  and $\bm e$ from some error distribution $\chi_{\sigma}^{m}$. \begin{itemize}
    \item The search-LWE problem is,
given $(\bm A, \bm{As+e})$,  find $\bm s$.
\item The decision-LWE problem is, given $(\bm A, \bm{As+e})$, distinguish it from $(\bm A, \bm{u})$, $\bm u \leftarrow\mathbb{Z}^{m}_q$.
\end{itemize} 
\end{definition}
In general, the entries of $\bm e$
are i.i.d. from a Gaussian-like distribution with standard deviation $\sigma\geq \sqrt{n}$. However, the hardness of LWE retains also for other types of small errors (e.g., uniform binary errors), provided that the number of samples is linear over $n$ \cite{micciancio2013hardness}.


Our idea is inspired by the learning with rounding (LWR)~\cite{alwen2013learning} problem, where the error term has been eliminated thanks to the randomness induced by modulus reduction. In a similar vein, if the quantization of gradients can induce random errors, the error term in LWE can also be cancelled. Let $\bm A$ be public and $\bm s$ be the secret key. We propose to encrypt the gradients by using
\begin{align}\label{eq:encryption}
 \bm{ct} \triangleq \mathrm{Enc}_{\bm s}(\bm g) = \bm {A s} + \gamma \mathrm{Q}_b({\bm g}) \in \mathbb{Z}^{m}_q
\end{align}
where $\mathrm{Q}_b(\cdot)$ denotes a quantizer using $b$ bits, and $\gamma$ denotes a proper scaling that ensures LWE is hard enough \footnote{In a conservative manner, $\gamma \mathrm{Q}_b({\bm g})$ lies in  $\mathbb{Z}^{m}_q$.}. 

Regarding decryption, we have
\begin{align}\label{eq:decryption}
	\bm g = \mathrm{Dec}_{\bm s}(\bm{ct}) = \gamma^{-1}( \mathrm{Enc}_{\bm s}(\bm g) - {\bm A \bm s}). 
\end{align}

In the distributed SGD model, we can assume that each client keeps a secret key ${\bm s}^{(i)}$. Via Shamir's secret sharing protocol~\cite{shamir1979share}, they can agree on the vector sum $\bm{s}_{sum} = \sum_{i=1}^{N}\bm{s}^{(i)}$.  The server aggregates all encrypted gradient as

\begin{align}\label{eq:aggregated_gra}
    \bm {ct}_{total}= \sum_{i=1}^N \mathrm{Enc}_{\bm s^{(i)}}(\bm g^{(i)}) 
	 =  {\bm A} \sum_{i=1}^N  {\bm s^{(i)}}  +  \gamma \sum_{i=1}^N \mathrm{Q}_b(\bm g^{(i)}),
\end{align}
and broadcast it to all clients. Each client can decrypted the aggregated gradients:
\begin{align}\label{eq:decrypted_gra}
    \bm {g}_{total} = \gamma^{-1}(\bm {ct}_{total} - {\bm A} \bm{s}_{sum})
\end{align}
The $\bm {g}_{total}$ is the same as that of the non-encryption based aggregation. Thus
our system features the strengths of cryptographic security with the precision of deep learning accuracy, offering the best of both worlds.

The whole procedure is summarized in Algorithm~\ref{alg:FLAG}, referred to as \textbf{F}ederated \textbf{L}earning via le\textbf{A}rning with \textbf{G}radients (FLAG), that incorporates Private-key Encryption into a distributed SGD framework.

\begin{algorithm}[th!] 
	\caption{Federated Learning via leArning with Gradients (FLAG)}
	\begin{algorithmic}[1]
	\State \textbf{Input:} Learning rate $\eta$, initial point $\bm{\theta}_0 \in \mathbb{R}^d$, communication round $T$, quantization bit $b$;
		\For {each communication rounds $t = 0, 1, ..., T-1$:}
		\State \textbf{On each client {$i=1, ..., N$}:}
        \State Receive the aggregated ciphertext and use the Eq.~\eqref{eq:decrypted_gra} to decrypte $\bm {g}_{total}$;
        \State Update the local model using~Eq.\eqref{eq:update_model};
		\State Compute the local gradient $\bm{g}^{(i)}_t$ using SGD;
        \State Clip and quantize $\bm{g}^{(i)}_t$ to $\mathrm{Q}_b(\bm g^{(i)})$ using Eq.~\eqref{eq:clipoperation} and \eqref{eq:quantize};
		\State Encrypte $\bm g^{(i)}$ as $\bm {ct}^{(i)}$ using Eq.~\eqref{eq:encryption};
		\State Send $\bm {ct}^{(i)}$ to the server;
 	\State \textbf{On the server:}
        \State Aggregate all encrypted gradients using Eq.~\eqref{eq:aggregated_gra};
		\State Broadcast aggregated ciphertext to all clients;
	\EndFor
\end{algorithmic} 
	\label{alg:FLAG}
\end{algorithm}
 
\subsection{On randomized quantization}
In the preprocessing stage, We clip the raw gradient $\bm{g}^{(i)}$ into $l_{\infty}$ norm with threshold $C$: 
\begin{equation}
	\bm{\hat g}^{(i)} = \frac{\bm{g}^{(i)}}{\max{\{1, \|\bm{g}^{(i)}\|_{\infty}/C\}}}. \label{eq:clipoperation} 
\end{equation}


We then quantize the clipped gradient into $b$ bits using the half-dithered quantizer~\cite{gray1993dithered,abdi2019nested}. In particular, the element-wise quantization function $Q_b$ is defined as
\begin{align}\label{eq:quantize}
   Q_b(\bm{\Tilde g}^{(i)}_j) = \text{round} (\cfrac{\bm{\hat g}_j^{(i)}}{\Delta} + u \bm I) \cdot \Delta,
\end{align}
where $\Delta = \frac{2C}{2^b}$ is the quantization step, $u \sim U(-\frac{\Delta}{2},\frac{\Delta}{2})$ is a uniform dither signal. Note that the scaling factor in encyprion is set as $\gamma = \frac{q}{2C}$. The main feature of the half-dithered quantizer is that $u$ is not needed at the receiver's side.

With reference to~\cite{gray1993dithered}, the  quantization noise $\bm{\varepsilon}= Q_b(\bm{\Tilde g}^{(i)}_j) - \bm{\Tilde g}^{(i)}_j$ equals to the sum of an uniform random variable $\bm{e}'\triangleq U(-\frac{\Delta}{2},\frac{\Delta}{2})$ and a determined dither signal $\bm{e}''\triangleq u\bm I$. It can be verified that $\mathbb{E}(\bm{\varepsilon}) = \bm{0}$, $\mathbb{E}(\bm{\varepsilon}_i)^2 \le \cfrac{C^2}{2^{2b-2}}.$


\subsection{CPA Security}\label{subsection:CPA_Security}
Following~\cite{katz2007introduction}, we introduce the Chosen-Plaintext Attack (CPA) indistinguishability experiment as follows.
\begin{enumerate}
    \item The adversary $\mathcal{A}$ is given oracle access to $\mathrm{Enc}_{\bm s}(\cdot)$ and two random gradients $\bm{g}_0,\bm{g}_1$.
    \item  A random bit $b\leftarrow \lbrace 0,1\rbrace$  is chosen. Then a ciphertext $ \bm{ct} \leftarrow \mathrm{Enc}_{\bm s}(\bm{g}_b)$ is computed and given to $\mathcal{A}$. We call $\bm{ct}$ the challenge ciphertext.
    \item The adversary $\mathcal{A}$ continues to have oracle access to $\mathrm{Enc}_{\bm s}(\cdot)$, and outputs a bit $b'$.
\end{enumerate}
 
A secret key encryption scheme is considered CPA secure if, for every efficient adversary $\mathcal{A}$, the following advantage is negligible:
\[
\text{Adv}_{\text{CPA}}(\mathcal{A}) = \left| \Pr[b'= b] - \Pr[b' \neq b] \right|.
\]

Thanks to the half-dither quantizer, we have
\begin{equation}
    \mathrm{Enc}_{\bm s}(\bm{g}_b)=\bm{A} \bm{s} + \gamma \bm{g}_b + \gamma \bm{e}'+ \gamma \bm{e}''
\end{equation}
where $\gamma \bm{e}' \sim U(-\frac{q}{2^{b+1}},\frac{q}{2^{b+1}})^m $ and $\gamma \bm{e}''$ is a determined dither signal. It has been shown in~\cite[Corollary 1]{dottling2013lossy} that LWE with uniform errors is not easier than the standard LWE assumption (with $m\geq 3n$ and $\frac{q}{2^{b+1}}\geq 2n^{0.5+\delta}m$, $\delta\rightarrow 0$). Thus it follows from the indistinguishability of $\bm{A} \bm{s} +  \gamma \bm{e}'$ to a uniform distribution that the CPA security holds. As parameters satisfying the CPA security proof are too strong, in Table~\ref{table:efficiency parameters}, we provide bit-level security based estimates against both classical and quantum attacks.

\subsection{Probability of Overflow}
We further evaluate the probability of overflows in the proposed HE.
We firstly introduce following Lemma to character the tail bound of the mixed Gaussian distribution:
\begin{lemma}
    Let $\bm{x}_{total} = \sum_{i=1}^Nw_i\bm{x}_i$, where $\bm{x}_i \sim N(\bm{0}, \sigma_x^2\bm{I})$ are i.i.d. Gaussian random vectors with zero mean and variance $\sigma_x^2$, and $w_i$ be weights. The tail of $\bm{x}_{total}$ can be approximated as:
    \begin{align}
        Pr(\|\bm{x}_{total}\|_{\infty}>t) = 1 - \Big[1-2erfc\Big(\frac{t}{\sqrt{2\sum_{i=1}^Nw_i^2\sigma_x^2}}\Big)\Big]^N
    \end{align}
\end{lemma}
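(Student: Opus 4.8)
The plan is to reduce the multivariate $\ell_\infty$ tail probability to a product of one-dimensional Gaussian tails, exploiting two facts: the Gaussian family is closed under linear combinations, and the coordinates of the resulting vector are mutually independent. First I would establish that $\bm{x}_{total}$ is itself a zero-mean Gaussian vector and compute its per-coordinate variance. Since the $\bm{x}_i$ are independent with $\bm{x}_i \sim N(\bm 0, \sigma_x^2 \bm I)$, the weighted sum $\bm{x}_{total}=\sum_{i=1}^N w_i \bm{x}_i$ has covariance $\big(\sum_{i=1}^N w_i^2 \sigma_x^2\big)\bm I$. Fixing a coordinate $j$, the scalar $(\bm{x}_{total})_j=\sum_{i=1}^N w_i (\bm{x}_i)_j$ is a sum of independent zero-mean Gaussians, hence
\[
(\bm{x}_{total})_j \sim N\!\big(0,\,\sigma_{total}^2\big),\qquad \sigma_{total}^2 := \sigma_x^2\sum_{i=1}^N w_i^2 .
\]
I would record this variance explicitly, since it is exactly what appears inside the erfc as $\sqrt{2\sigma_{total}^2}=\sqrt{2\sum_{i=1}^N w_i^2\sigma_x^2}$.

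Next I would use independence to factor the event. Because the covariance of $\bm{x}_{total}$ is a scalar multiple of $\bm I$, its coordinates are uncorrelated, and for a jointly Gaussian vector uncorrelated coordinates are independent. Hence the $\ell_\infty$ event factors over the coordinates,
\[
\Pr\big(\|\bm{x}_{total}\|_{\infty}\le t\big)=\prod_{j}\Pr\big(|(\bm{x}_{total})_j|\le t\big)=\Big[1-\Pr\big(|(\bm{x}_{total})_j|> t\big)\Big]^{N},
\]
where all factors coincide because the coordinates are identically distributed. The two-sided tail of a single coordinate is $\Pr(|(\bm{x}_{total})_j|>t)=2\,\mathrm{erfc}\big(t/\sqrt{2\sigma_{total}^2}\big)$ under the paper's normalization (the factor $2$ accounting for both tails), so substituting and taking one final complement yields the claimed expression.

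The main obstacle is not analytic but bookkeeping about the exponent: the product runs over the ambient dimension of $\bm{x}_{total}$, and the stated exponent $N$ implicitly identifies that dimension with the client count; I would make this identification explicit (or replace $N$ by the vector dimension) to prevent confusion, since the max in $\|\cdot\|_\infty$ is taken over coordinates rather than over the $N$ summands. A secondary point concerns the word \emph{approximately}: for genuinely independent Gaussian coordinates the identity is exact once the erfc normalization is fixed, so the approximation should be attributed to the modeling assumption that the weighted gradient components are i.i.d.\ Gaussian and mutually independent, rather than to the probabilistic calculation itself. I would flag this caveat so the reader sees precisely where the only loss of rigor enters.
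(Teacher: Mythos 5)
Your proof is correct and follows the same route as the paper, whose entire written proof is just your first step (that $\bm{x}_{total}$ is Gaussian with covariance $\sum_{i=1}^N w_i^2\sigma_x^2\bm{I}$), leaving the coordinate-wise factorization and the erfc tail computation implicit. Your two caveats are both well taken and go beyond what the paper records: the exponent in the product should be the coordinate dimension of the vector rather than the client count $N$, and the ``approximation'' is exact once the Gaussian modeling assumption and the erfc normalization (where the paper's factor of $2$ in front of erfc is questionable under the standard convention $\Pr(Z>t)=\tfrac{1}{2}\mathrm{erfc}(t/(\sigma\sqrt{2}))$) are fixed.
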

\begin{proof}
    Given $\bm{x}_i \sim N(\bm{0}, \sigma_x^2\bm{I})$ are i.i.d. Gaussian random vectors with zero mean and variance $\sigma_x^2$, hence $\bm{x}_{total}$ is also Gaussian with mean $\bm 0$ and variance $\sum_{i=1}^Nw_i^2\sigma_x^2\bm{I}$.
\end{proof}
It is common in FL literature to assume that the Gradients in FL admit Gaussian distributions~\cite{zhang2020batchcrypt}. To evaluate the probability of overflows in the proposed HE, we assume that the local gradient $\bm{g}^{(i)} \sim N(\bm{0}, \sigma_g^2\bm{I})$ are i.i.d. Gaussian random vectors with zero mean and variance $\sigma_g^2$. Hence, we have:
\begin{align}
    P_o & \triangleq Pr[\| \gamma \sum_{i=1}^N  Q(\bm g^{(i)})\|_{\infty}>\frac{q}{2}] \nonumber\\
    &=  1 - \Big[1-2erfc\Big(\frac{C}{\sqrt{2N\sigma_g^2}}\Big)\Big]^N
\end{align}
Let $P_o\le \delta$, we have
\begin{align}
    C \ge \sqrt{2N\sigma_g^2}erfc^{-1}\Big(\frac{1}{2}-\frac{1}{2}(1-\delta)^{1/N}\Big)
\end{align}
To prevent overflow while selecting the quantization range, we need to ensure that the value of $C$ is large enough. C represents the size of the selected quantization range. The more clients number $N$ or the smaller $\delta$, the greater the value of $C$ we need to prevent overflow.

\begin{table*}\label{table:efficiency parameters}
\centering
	\caption{The hardness of efficiency parameters. (The `primal' represents the primal attack, the `dual' represents the dual attack, the `classical' represents the known classical complexity, the `quantum' represents the known quantum complexity, the `plausible' represents the best possible algorithm complexity, the `M' represents the number of samples required, and the `K' denotes the block size of BKZ algorithm.)}
\begin{tabular}{|l|l|l|l|l|l|}
\hline
                  parameters ($n$, $q$, $b$) &  attack &  $(M,K)$&  classical&  quantum& plausible \\ \hline
\multirow{2}{*}{($256$, $65536$, $8$)} &primal &$(764,970)$ &294 &267 &211  \\ \cline{2-6} 
                 &dual &$(766,956)$ &290 &263 &208  \\ \hline
\multirow{2}{*}{($256$, $65536$, $9$)} &primal &$(749,716)$ &219 &199 &158  \\ \cline{2-6} 
                 &dual &$(751,707)$ &216 &197 &156  \\ \hline
\multirow{2}{*}{($256$, $65536$, $10$)} &primal &$(654,547)$ &169 &154 &123  \\ \cline{2-6} 
                 &dual &$(667,540)$ &167 &152 &121  \\ \hline
\end{tabular}
\end{table*}

\subsection{Communication Cost}
Noted that in this work, the same $\bm A$ can be reused with many different $\bm s_t$, making the amortized cost of $\bm A$ arbitrarily small~\cite{micciancio2023error}. Hence, pseudo-random generators are implemented on both the server and clients to generate $\bm A$ with a short seed synchronously to improve communication efficiency further.
\begin{lemma}[Increased Communication Factor]
    The communication between the server and clients of FLAG is
\begin{align}
    \tau = 1 + \cfrac{2+1.5\log_2(m)-0.5\log_2(3)}{b}
\end{align}
time of the communication of the corresponding distributed quantized SGD, $b$ is the number of bits of quantization.
\end{lemma}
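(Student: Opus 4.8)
The plan is to express the per-round uplink payload of FLAG and of the plain distributed quantized-SGD baseline in bits and then take their ratio. First I would fix the granularity of comparison to a single encrypted gradient block of dimension $m$, since the encryption in Eq.~\eqref{eq:encryption} maps such a block to a ciphertext $\bm{ct}\in\mathbb{Z}^m_q$. For the baseline, a client transmits the $b$-bit quantized block $\mathrm{Q}_b(\bm g)$, whose $m$ entries each take one of $2^b$ values, for a total of $mb$ bits. For FLAG, the client transmits $\bm{ct}$, whose $m$ entries each live in $\mathbb{Z}_q$ and hence cost $\log_2 q$ bits, for a total of $m\log_2 q$ bits; the matrix $\bm A$ contributes nothing to this count because, as noted above, it is regenerated synchronously from a short shared seed and its amortized cost is negligible. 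Dividing, the communication factor is
\[
\tau = \frac{m\log_2 q}{mb} = \frac{\log_2 q}{b}.
\]

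The second step is to substitute the smallest modulus $q$ that is still admissible for the CPA-security argument of Section~\ref{subsection:CPA_Security}. There the reduction of~\cite{dottling2013lossy} requires $m\geq 3n$ together with $\frac{q}{2^{b+1}}\geq 2n^{0.5+\delta}m$ as $\delta\rightarrow 0$. Taking the tight regime $m=3n$ (so $n=m/3$) and letting $\delta\rightarrow 0$, the right-hand side becomes $2(m/3)^{1/2}m = 2m^{3/2}/\sqrt{3}$, so the minimal secure modulus is
\[
q = 2^{b+1}\cdot \frac{2m^{3/2}}{\sqrt{3}} = \frac{2^{b+2}m^{3/2}}{\sqrt{3}}.
\]

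Finally I would take base-two logarithms and simplify, using $\log_2 q = (b+2) + \frac{3}{2}\log_2 m - \frac{1}{2}\log_2 3$, to obtain
\[
\tau = \frac{\log_2 q}{b} = 1 + \frac{2 + 1.5\log_2(m) - 0.5\log_2(3)}{b},
\]
as claimed.

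The main obstacle is not the algebra but pinning down the modeling conventions that make the ratio clean. One must justify that the fair baseline charges exactly $b$ bits per coordinate (no entropy-coding gain is assumed), that $\bm A$ genuinely drops out under seed-based generation, and---most importantly---that it is legitimate to insert the boundary value of $q$ from the security constraint rather than a larger $q$ possibly dictated by the overflow analysis of the preceding subsection, which constrains $C$ (and thereby $q$ through $\gamma=q/2C$) separately. I would therefore emphasize that $\tau$ is evaluated at the security-critical minimal $q$, and observe that any larger $q$ chosen to further suppress overflow only increases $\log_2 q$ and hence $\tau$; the stated expression is thus the best-case (smallest) communication factor consistent with the CPA-security requirement.
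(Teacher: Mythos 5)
Your proposal is correct and follows essentially the same route as the paper: compare the $b$-bit-per-coordinate plaintext cost to the $\log_2 q$-bit-per-coordinate ciphertext cost (with $\bm A$ amortized away via a shared seed), then substitute the minimal modulus $q = 3^{-0.5}2^{b+2}m^{1.5}$ implied by the CPA-security constraint with $m=3n$. Your added remarks on why the minimal secure $q$ is the right value to plug in, and why a larger $q$ only increases $\tau$, are a useful clarification the paper leaves implicit.
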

We can observe that the increased communication factor $\tau$ increases as the value of $m$ increases. In practical, $n$ is usually taken as 128, therefore, we take $m=3n=384$. This results in $\tau = 1 + \frac{12.5 + \log_2(3)}{b}$. This means that the larger the quantization bits, the smaller the increased communication factor. The additional communication overhead is due to the need to transmit additional information to enable homomorphic encryption. By increasing the number of bits, the additional cost is diluted.

\textit{proof outline}: In distributed SGD (Section II-B), each client sends gradients directly to the parameter server at each iteration, so that the communication cost for one iteration in bits is:
\begin{align}
    PlainBits = d\cdot b
\end{align}
In FLAG, we compute the ciphertext length that each client send to the cloud parameter server at each iteration. Hence, its length in bits is $CtBits = d\log_2q$. From Subsection~\ref{subsection:CPA_Security}, we have $q \geq 3^{-0.5}2^{b+2}m^{1.5}$. Hence, the communication cost of FLAG for one iteration in bits is:
\begin{align}
    CtBits = d [b+2+1.5\log_2(m)-0.5\log_2(3)]
\end{align}

Therefore, the increased factor is:
\begin{align}
    \tau \triangleq \cfrac{CtBits}{PlainBits} = \cfrac{b+2+1.5\log_2(m)-0.5\log_2(3)}{b}
\end{align}

Vanilla FL

Then we characterize the convergence performance in the following Theorem.
\begin{theorem}
    For an $N$-client distributed learning problem, the convergence error of FLAG for the smooth objective is upper bounded by
	\begin{align} \label{eq:cov_of_PE-SGD}
		&\frac{1}{T}\sum_{t=0}^{T-1}\mathbb{E}[\|\nabla F(\bm{\theta}_t)\|^2] \le \underbrace{\frac{2[F(\bm{\theta}_0)-F(\bm{\theta}^*)]}{T\eta}+ \frac{\sigma^2}{NB}}_{\triangleq\mathcal{E}_{Vanilla FL}} + \underbrace{\cfrac{dC^2}{N2^{2b}}}_{\triangleq\mathcal{E}_{Q}}
	\end{align}
 And the communication between the server and
clients of FLAG is
\begin{align}
    \tau = 1 + \cfrac{2+1.5\log_2(m)-0.5\log_2(3)}{b}
\end{align}
time of the communication of the corresponding distributed SGD.
\end{theorem}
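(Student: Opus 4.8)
The convergence bound is the only part that needs work: the communication-factor equality is verbatim the Increased Communication Factor lemma already established above, so I would simply invoke it. For the bound, the plan is to run the standard one-step descent analysis for smooth nonconvex objectives, treating the decrypted aggregate $\frac{1}{N}\bm{g}_{total} = \frac{1}{N}\sum_{i=1}^N Q_b(\bm{g}^{(i)}_t)$ as a stochastic gradient oracle with two independent, zero-mean noise sources: minibatch sampling and dither quantization. Two facts carry the argument. First, because the dither $u$ is fresh uniform randomness, $\mathbb{E}[Q_b(\bm{g}^{(i)}_t)] = \bm{g}^{(i)}_t$, which combined with $\mathbb{E}[\bm{g}^{(i)}_t] = \nabla F_i(\bm{\theta}_t)$ and $\frac{1}{N}\sum_i \nabla F_i = \nabla F$ makes the oracle unbiased for $\nabla F(\bm{\theta}_t)$. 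Second, the per-coordinate noise bound $\mathbb{E}[\bm{\varepsilon}_j^2]\le C^2/2^{2b}$ from Section III-B controls the quantization variance. I would assume throughout, consistent with the overflow analysis of Section III-D, that $C$ is chosen large enough that clipping is inactive, so no clipping bias enters the oracle.

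First I would apply the smoothness inequality \eqref{eq:smooth_1} with $\bm{\theta}'=\bm{\theta}_{t+1}$, $\bm{\theta}=\bm{\theta}_t$, and $\bm{\theta}_{t+1}-\bm{\theta}_t = -\frac{\eta}{N}\bm{g}_{total}$, then take the conditional expectation over the round-$t$ sampling and dither randomness. Unbiasedness collapses the linear term into $-\eta\|\nabla F(\bm{\theta}_t)\|^2$, leaving the quadratic term $\frac{\nu\eta^2}{2}\mathbb{E}\|\frac{1}{N}\bm{g}_{total}\|^2$ to be bounded.

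Next I would expand that second moment by bias–variance, $\mathbb{E}\|\frac{1}{N}\bm{g}_{total}\|^2 = \|\nabla F(\bm{\theta}_t)\|^2 + \mathrm{Var}(\frac{1}{N}\bm{g}_{total})$, and split the variance into its two independent parts. The sampling part gives $\frac{1}{N^2}\sum_i \mathrm{Var}(\bm{g}^{(i)}_t)\le \frac{\sigma^2}{NB}$ by Assumption \ref{ass:stochastic_gradient}, with the $1/B$ coming from batch averaging and the $1/N$ from independence across the $N$ clients; the quantization part gives $\frac{1}{N^2}\sum_i \mathbb{E}\|\bm{\varepsilon}^{(i)}\|^2 \le \frac{dC^2}{N2^{2b}}$, with the factor $d$ from summing over coordinates. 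Substituting back and choosing $\eta = 1/\nu$ makes $\frac{\nu\eta^2}{2}=\frac{\eta}{2}$, so the stray $\|\nabla F(\bm{\theta}_t)\|^2$ from the quadratic term is absorbed into the negative linear term, yielding a clean one-step recursion. Telescoping over $t=0,\dots,T-1$, using $F(\bm{\theta}_T)\ge F(\bm{\theta}^*)$, and dividing by $T\eta/2$ then reproduces exactly the advertised sampling/optimization term and the quantization term $\mathcal{E}_Q$.

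The step I expect to be the main obstacle is the variance split: it is only this clean if the cross term between sampling noise and quantization noise vanishes and the per-client noises are mutually independent. I would therefore state the independence of the dither (across clients and from the minibatch) explicitly and verify the cross terms are zero using that the dither noise is zero-mean conditioned on the gradient. The secondary subtlety is the ``no clipping bias'' assumption: I would present it as a standing assumption tied to the overflow-based choice of $C$ rather than prove it, since otherwise the oracle is only unbiased for the clipped gradient and an additional bias term would have to be tracked.
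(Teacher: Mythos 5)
Your proposal is correct and follows essentially the same route as the paper: unbiasedness plus the dithered-quantizer variance bound give the second-moment estimate on $\bm{g}_{total}$, which is fed into the smoothness descent inequality with $\eta\le 1/\nu$ and telescoped. The only differences are cosmetic (you normalize by $N$ up front and phrase the second-moment bound as an explicit bias--variance split), and your flagged caveats about cross-term independence and clipping bias are points the paper silently assumes rather than addresses.
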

The first item in Eq.~\ref{eq:cov_of_PE-SGD}, denoted as $\mathcal{E}_{Vanilla FL}$, refers to the convergence error bound of vanilla federated learning without quantization and encryption operations. The second item is the quantization error, indicating the trade-off between communication budget and the accuracy of our algorithm. The quantization error is inversely proportional to $1/2^{2b}$, meaning that less communication budget can lessen the model's accuracy.
\begin{proof} We firstly introduce following Lemma:
\vspace{-2mm}
\begin{lemma}[~\cite{abdi2019nested}]
    The quantization error of the half-dithered quantizer is:
    \begin{align} \label{eq:dithered_error}
    \mathbb{E}[Q_b[\bm{\Tilde g}_t^{(i)}]] = \bm{\Tilde g}_t^{(i)}\\
        \mathbb{E}\|Q_b[\bm{\Tilde g}_t^{(i)}] - \bm{\Tilde g}_t^{(i)}\|^2 \le \cfrac{dC^2}{2^{2b}}
    \end{align}
\end{lemma}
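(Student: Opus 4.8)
The plan is to reduce the vector statement to a single coordinate and then invoke the elementary theory of non-subtractive dithered quantization, which is the content borrowed from~\cite{abdi2019nested}. Because the dither acts independently per entry and both the expectation and the squared norm split coordinatewise, it suffices to fix one scalar input $x := \tilde g_{t,j}^{(i)}\in[-C,C]$, where the range is guaranteed by the clipping in~\eqref{eq:clipoperation}, and to study the scalar map $Q_b(x)=\Delta\,\mathrm{round}(x/\Delta + v)$, with $v:=u/\Delta\sim U(-\tfrac12,\tfrac12)$ the dither expressed in step units and the only source of randomness. Passing to normalized coordinates $y:=x/\Delta$ and writing $y=n+f$ with $n$ the nearest integer to $y$ and $f:=y-n\in[-\tfrac12,\tfrac12)$, I would use translation-equivariance of rounding under integer shifts to obtain $\mathrm{round}(y+v)=n+\mathrm{round}(f+v)$, so the entire problem collapses to understanding $\mathrm{round}(f+v)$ for $f,v\in[-\tfrac12,\tfrac12)$.

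For the unbiasedness identity I would compute $\mathbb{E}_v[\mathrm{round}(f+v)]$ directly. Since $f+v\in(-1,1)$ the rounded value lies in $\{-1,0,1\}$, and integrating the uniform density over the three events gives probability $f$ for the outcome $+1$ when $f\ge 0$ (respectively probability $|f|$ for $-1$ when $f<0$), so that $\mathbb{E}_v[\mathrm{round}(f+v)]=f$ in every case. Hence $\mathbb{E}_v[Q_b(x)]=\Delta(n+f)=x$, and stacking the coordinates yields $\mathbb{E}[Q_b(\tilde{\bm g}_t^{(i)})]=\tilde{\bm g}_t^{(i)}$, the first line of~\eqref{eq:dithered_error}.

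For the second-moment bound I would reuse the same event decomposition. The error $\varepsilon:=Q_b(x)-x=\Delta(\mathrm{round}(f+v)-f)$ is a two-point random variable taking the value $\Delta(1-f)$ with probability $f$ and $-\Delta f$ with probability $1-f$ (for $f\ge 0$; the case $f<0$ is symmetric), so that $\mathbb{E}_v[\varepsilon^2]=\Delta^2 f(1-f)$, and more generally $\Delta^2|f|(1-|f|)$. Maximizing $|f|(1-|f|)$ over $|f|\in[0,\tfrac12]$ gives the value $\tfrac14$ at $|f|=\tfrac12$, whence $\mathbb{E}_v[\varepsilon^2]\le \Delta^2/4 = C^2/2^{2b}$ after substituting $\Delta=2C/2^b$. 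Summing this per-coordinate bound over the $d$ entries gives $\mathbb{E}\|Q_b(\tilde{\bm g}_t^{(i)})-\tilde{\bm g}_t^{(i)}\|^2\le dC^2/2^{2b}$, which is exactly the second line of~\eqref{eq:dithered_error}.

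The step I expect to be the main obstacle is not the integration but justifying that the quantizer operates in its unsaturated regime, since both the identity $\mathrm{round}(y+v)=n+\mathrm{round}(f+v)$ and the exact two-point law of $\varepsilon$ implicitly assume that $y+v$ never exceeds the outermost levels of the $b$-bit grid. This is precisely where clipping and overflow control enter: \eqref{eq:clipoperation} forces $|x|\le C$, i.e. $|y|\le 2^{b-1}$, and the no-overflow event analyzed in Section~III-D ensures the dithered value stays within the representable range, so the boundary levels are never reached and the clean dither identities apply. A minor secondary point is the tie-breaking convention of $\mathrm{round}$ at half-integers, but since such inputs form a measure-zero set for the continuous dither $v$ they do not affect any of the expectations above.
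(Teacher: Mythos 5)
Your proof is correct, but it takes a genuinely different route from the paper: the paper never proves this lemma, it imports it wholesale by citation (the reference in the lemma's header), and the surrounding discussion in Section III.B only asserts the error decomposition into a uniform part plus the dither, with the cruder per-coordinate bound $C^2/2^{2b-2}=\Delta^2$. Your argument is a self-contained elementary replacement: reducing to a single coordinate, writing $x=\Delta(n+f)$ and using integer-shift equivariance of rounding, you derive the exact conditional law of the error --- a two-point distribution with conditional second moment $\Delta^2|f|(1-|f|)$ --- which yields both unbiasedness conditional on the input and the worst-case per-coordinate bound $\Delta^2/4=C^2/2^{2b}$, i.e. exactly the lemma's constant, tight at $|f|=\tfrac12$. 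This buys something the citation does not: it exhibits that with uniform non-subtractive dither the second moment is input-dependent (only the first moment is input-independent), so the lemma's bound is a worst case over inputs and its constant cannot be improved.

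Three minor points, none affecting correctness. First, you silently repaired a notational glitch in Eq.~\eqref{eq:quantize}: as written, $u\sim U(-\tfrac{\Delta}{2},\tfrac{\Delta}{2})$ is added after division by $\Delta$, which would make the dither width $\Delta$ in step units; your reading $v=u/\Delta\sim U(-\tfrac12,\tfrac12)$ is the intended one and is what the argument requires. Second, the paper's $u\bm I$ denotes one scalar dither shared across coordinates, not independent per-entry dithers; this is immaterial since both the expectation and the squared norm split coordinatewise by linearity, but you should not assert independence. Third, your ``main obstacle'' (saturation) is actually a non-issue: the quantizer in Eq.~\eqref{eq:quantize} is defined through an unbounded rounding map with no clamping to a $2^b$-level grid, so the dither identities hold unconditionally given the clipping in Eq.~\eqref{eq:clipoperation}; moreover, the overflow analysis in the paper's ``Probability of Overflow'' subsection concerns wrap-around modulo $q$ of the aggregated ciphertext, not saturation of the quantizer grid, so it is not the right mechanism to invoke there.
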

Due to the fact that private-key encryption scheme does not introduce additional errors, Eq.~\eqref{eq:decrypted_gra} acctually is actually equivalent to $\bm {g}_{total} = \sum_{i=1}^N Q_b[\bm{\Tilde g}_t^{(i)}]$. Combining Assumption~\ref{ass:stochastic_gradient} and Eq.~\eqref{eq:dithered_error}, the properties of aggregated gradient $\bm {g}_{total}$ satisfy:
\begin{align}
	&\mathbb{E}[\bm {g}_{total}] = N\nabla F(\bm{\theta}_t),\label{eq:unbiassness}\\
	&\mathbb{E}\left[||\bm {g}_{total}||^2\right] \le N^2\|\nabla F(\bm{\theta}_t)\|^2 
	+ \frac{N\sigma^2}{B} + \cfrac{dNC^2}{2^{2b}}
	\label{eq:qsg}
\end{align}

Firstly, we consider function $F$ is $\nu\text{-smooth}$, and use Eq.~\eqref{eq:smooth_1}:
\begin{align}
	F(\bm{\theta}_{t+1}) &\le F(\bm{\theta}_t) + \nabla F(\bm{\theta}_t)^\top (\bm{\theta}_{t+1}-\bm{\theta}_t) + \frac{\nu}{2} \|\bm{\theta}_{t+1}-\bm{\theta}_t\|^2 \nonumber\\
    &= F(\bm{\theta}_t) + \nabla F(\bm{\theta}_t)^\top (-\cfrac{\eta}{N} \bm {g}_{total}) + \frac{\nu}{2} \|\cfrac{\eta}{N} \bm {g}_{total}\|^2
\end{align}

Taking total expectations and using Eq.~\eqref{eq:unbiassness} and \eqref{eq:qsg}:
\begin{equation}\nonumber
	\begin{split}
		\mathbb{E}[F(\bm{\theta}_{t+1})] &\le F(\bm{\theta}_t) + (-\eta + \frac{\nu \eta^2}{2})\|\nabla F(\bm{\theta}_t)\|^2 \\  
  &+ \frac{\nu \eta^2\sigma^2}{2BN} + \cfrac{d\nu \eta^2C^2}{2*2^{2b}N}.
	\end{split}
\end{equation}

Subtracting $F(\bm{\theta}_t)$ from both sides, and for $\eta \le \frac{1}{\nu}$
\begin{align*}
	\mathbb{E}[F(\bm{\theta}_{t+1})]- F(\bm{\theta}_t)&\le -\frac{\eta}{2}\|\nabla F(\theta_t)\|^2 + \frac{\eta\sigma^2}{2BN} + \cfrac{d\eta C^2}{2*2^{2b}N}.
\end{align*}
Applying it recursively, this yields:
\begin{align*}
	\mathbb{E}[F(\bm{\theta}_T)]-F(\bm{\theta}_0) &\le -\frac{\eta}{2} \sum_{t=0}^{T-1} \|\nabla F(\bm{\theta}_t)\|^2+ \frac{T\eta\sigma^2}{2BN} + \cfrac{Td\eta C^2}{2*2^{2b}N}.
\end{align*}
Considering that $F(\bm{\theta}_T) \ge F(\bm{\theta}^*)$, so:
\begin{align*}
	&\frac{1}{T}\sum_{t=0}^{T-1}\mathbb{E}[\|\nabla F(\bm{\theta}_t)\|^2] \le \frac{2[F(\bm{\theta}_0)-F(\bm{\theta}^*)]}{T\eta}+ \frac{\sigma^2}{NB} + \cfrac{dC^2}{N2^{2b}}.
\end{align*}
\end{proof}

\subsection{Algorithmic Complexity}
Client-side computation involves three primary tasks:
\begin{itemize}
    \item \textbf{Quantization:} The complexity of quantizing a single element using the half-dithered quantizer is constant, denoted as $O(1)$. As this process is independently applied to each element of the vector, the total computational complexity becomes $O(m)$, where $m$ is the dimension of the gradient.
    
    \item \textbf{Encryption:} The computational complexity of the matrix-vector multiplication and addition is expressed as $O(mn + m)$, which simplifies to $O(mn)$.
    
    \item \textbf{Decryption:} The complexity of decryption is also $O(mn)$.
\end{itemize}

In summary, the overall computational complexity is $O(mn)$. It's worth noting that if the underlying problem is based on ring-LWE \cite{peikert2014lattice} (though the security of ring-LWE is less established), the overall computational complexity could be further reduced to $O(n\log n)$.
 
\section{Experiments}

In this section, we conduct experiments on MNIST to empirically validate our proposed FLAG method. The MNIST consists of 70000 $1 \times 28 \times 28$ grayscale images in 10 classes.

\textbf{Experimental Setting.} The security of our LWE instance is parameterized by the tuple $(n, q, b)$ where $n$ is the dimension of the secret $\bm{s}$, $q$ is the size of modulus, and $b$ controls the width of the uniform distribution $U(-\frac{q}{2^{b+1}},\frac{q}{2^{b+1}})^m$. We set $n=256, m = 768$, $q=65536$, $b=6,8,10$ respectively. Additionally, we list the accuracy achieved by Vanilla Fedearted Learning (Vanilla FL) without gradient compression and encryption as a benchmark.
We conduct experiments for $N = 8$ clients and use LeNet-5~\cite{lecun2015lenet} for all clients. We select the momentum SGD as an optimizer, where the learning rate is set to 0.01, the momentum is set to 0.9, and weight decay is set to 0.0005. We use the $l_{\infty}$ norm of the aggregation gradient from the previous iteration as the clipping threshold $C$ for the current iteration. Considering that $d>>m$ in deep learning, reshaping tensors if necessary, a bucket will be defined as a set of $m$ consecutive vector values. (E.g. the $j$-th bucket is the sub-vector $\bm g^{(i)} [(j-1)m+1 : jm]^\top$). We will encrypt each bucket independently, using FLAG.

Figure~\ref{fig:testing_performance} illustrates the test accuracy of Vanilla FL and FLAG on MNIST. Vanilla FL achieves a test accuracy of 0.9691 with 32-bit full precision gradients. When $b=6,8,10$ bits, FLAG achieves test accuracies of 0.8487, 0.9026 and 0.9338, respectively.

\begin{figure}[ht] 
    \label{fig:Model_Performance}
    \centering
    \includegraphics[width=0.35\textwidth]{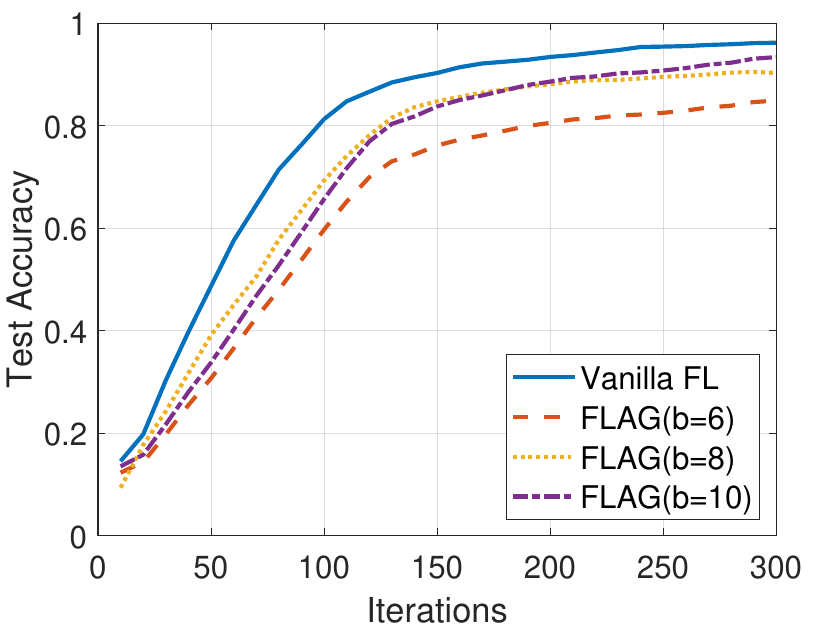}
    \caption{Model Performance of Different Bits.}
    \label{fig:testing_performance}
\end{figure}

Table~\ref{table:Model_Performance} shows the relationship between quantization bits and test accuracy. The proposed FLAG model exhibits a communication-learning tradeoff; that is, the higher the quantization bits, the higher the test accuracy. Additionally, when $b=6,8,10$ bits, the increased communication factors $\tau = 3.5, 2.876, 2.5$ respectively. This means that more quantization bits can help to dilute the additional communication costs generated by encryption and reduce the increased communication factors $\tau$.

\begin{table}[ht] 
\label{table:Model_Performance}
\centering
\caption{Model Performance of Different Bits. }
\begin{tabular}{llll}
\hline
 Bits $b$& 6 &  8 & 10 \\
 \hline
 Increased Factor $\tau$& 3.5 & 2.876 & 2.5 \\
 \hline
 Test Accuracy & 0.8487 & 0.9026 & 0.9338\\
 \hline
\end{tabular}
\end{table}

Table~\ref{table:Model_Performance2} compares the performance of FLOP vs. the performance of an LWE based encryption whose error term is not eliminated (i.e., $\mathrm{Enc}_{\bm s}^{LWE}(\bm g) = \bm {A s} + \gamma \mathrm{Q}_b({\bm g})+\bm{e}$). We adopt the same parameter setting of FLOP as before, except that the number of client is set as $N=100$. The standard deviation of the error $\bm{e}$ in LWE is set as $\sigma=q/(2^b)\times \sqrt{12}$. Notably, when $b=6$, the overflow probability of $\mathrm{Enc}_{\bm s}^{LWE}$ is as large as $0.3558$.
\begin{table}[ht] 
\label{table:Model_Performance2}
\centering
\caption{Probability of overflows. }
\begin{tabular}{llll}
\hline
 Bits $b$& 6 &  8 & 10 \\
 \hline
 FLOP & 0 & 0 & 0 \\
 \hline
$\mathrm{Enc}_{\bm s}^{LWE}$ & 0.3558 & 0.0003 & 0 \\
 \hline
\end{tabular}
\end{table}
\section{Conclusion}
In conclusion, we have introduced FLAG as  a novel federated learning framework that leverages private-key encryption based on lattices. The error term in LWE is generated from a randomized quantization of the gradients. The CPA security of the scheme is proved based on the hardness of LWE over uniform errors, thus our system leaks no information of participants
to the honest-but-curious parameter server. 
FLAG features a small probability of overflow, and achieves  accuracy to close to unquantized DSGD. The experiments shed light on the adaptability of FLAG under different security parameters and highlight its potential for privacy-preserving federated learning. 
\clearpage
\bibliographystyle{IEEEtran}
\bibliography{mybib}

\end{document}